\definecolor{mydarkblue}{rgb}{0,0.08,0.45}
\date{\vspace{-5ex}}
\newtheorem{lemma}{Lemma}
\newtheorem{fact}{Fact}
\theoremstyle{definition}
\theoremstyle{remark}
\numberwithin{equation}{section}
\DeclareMathOperator*{\argmin}{arg\,min}
\DeclareMathOperator*{\argmax}{arg\,max}
\newcommand*{\poly}{\text{poly}}
\algnewcommand\algorithmicsend{\textbf{Break}}
\DeclareSymbolFont{bbold}{U}{bbold}{m}{n}
\DeclareSymbolFontAlphabet{\mathbbold}{bbold}
\begin{document}
\newcommand{\theTitle}{An Efficient Algorithm for High-Dimensional Log-Concave Maximum Likelihood}
\author{
 \fontsize{11}{13}\selectfont {\bf Brian Axelrod} \\
 \fontsize{11}{13}\selectfont Stanford University \\
 \fontsize{11}{13}\selectfont {\tt baxelrod@cs.stanford.edu}
\and
\fontsize{11}{13}\selectfont {\bf Gregory Valiant} \\
\fontsize{11}{13}\selectfont Stanford University \\
\fontsize{11}{13}\selectfont {\tt valiant@stanford.edu}
}

\title{\theTitle}
\date{}

\clearpage
\maketitle

\begin{abstract}
The log-concave maximum likelihood estimator (MLE) problem answers: for a set of points $X_1,...X_n \in \mathbb R^d$, which log-concave density maximizes their likelihood? We present a characterization of the log-concave MLE that leads to an algorithm with runtime $\poly(n,d, \frac 1 \epsilon,r)$ to compute a log-concave distribution whose log-likelihood is at most $\epsilon$ less than that of the MLE, and $r$ is parameter of the problem that is bounded by the $\ell_2$ norm of the vector of log-likelihoods the MLE evaluated at $X_1,...,X_n$.
\end{abstract}

\thispagestyle{empty}
\newpage
\setcounter{page}{1}	

\section{Introduction}
\subsection{Motivation and Related Work}
One of the central questions of both statistics and learning is recovering a distribution from samples. Much of the work in estimation and learning focuses on parametric statistics which assumes the data generating distribution is of a parametric form. This assumption often allows for statistically and computationally efficient inference and learning.

Shape-constrained density estimation aims to create a middle ground between parametric statistics and making no distributional assumptions. Instead of assuming the distribution is of a particular parametric form, shape-constrained density assumes the distribution has a density conforming to a shape constraint such as log concavity.

Log-concave density estimation in high dimensions, in particular, has been studied by both the learning and statistics communities. Cule et al. were the first to study the recovery of log-concave densities in high dimensions \cite{cule2010maximum}. They showed that the log-concave maximum likelihood estimator (MLE) converges asymptotically and proposed an algorithm to compute it \cite{cule2010maximum, cule2010theoretical}. Computational efficiency was not a focus of the work and the presented algorithm has a step which requires computing a large triangulation. For $n$ samples in $\mathbb R^d$, the triangulation can be of size $O(n^{d/2})$ making it difficult to scale the algorithm to large dimensions.

Later work characterized the finite sample complexity of log-concave density estimation. First Kim et al. showed that no method can get closer than squared Hellinger distance $\epsilon$ (and indirectly, total variation distance) with $\widetilde O\left ( 1/\epsilon^{\frac{d+1}{2}} \right ) $ samples, where $\widetilde O$ hides logarithmic factors in $1/\epsilon, d$ \cite{kim2016global}. Later work demonstrated methods for learning log-concave distributions in total variation and squared Hellinger distances with bounded sample complexity. First Diakonikolas et al. showed a method that obtains sample complexity $\widetilde O\left ( 1/\epsilon^{\frac{d+5}{2}} \right ) $ with respect to total variation distance \cite{diakonikolas2016learning}. This work did \emph{not} use the log-concave MLE. Carpenter et al. later showed that the log-concave MLE is also effective for learning in squared Hellinger distance \cite{carpenter2018near}. The log-concave MLE was shown to converge to square Hellinger distance $\epsilon$ with $\widetilde O\left ( 1/\epsilon^{\frac{d+3}{2}} \right ) $ with high probability, showing the log-concave MLE is nearly optimal in said metric. Both of these works are non constructive and do not provide efficient algorithms.

However, while the line of work studying the log-concave MLE from a information-theoretic perspective is extensive, there is little work on finding a \emph{computationally efficient} algorithm for high dimensional log-concave MLE problems. Our main contribution is a characterization of the solution to the log-concave MLE that leads to an algorithm with polynomial dependence on the dimension.

\begin{restatable}{theorem}{thmalgruntime}
Let $p^\star$ be the solution to the log-concave MLE and $l(p)$ be the log-likelihood of $X_1,...,X_n \in \mathbb R^d$. A distribution $p$ such that $l(p) - l(p^\star) \le \varepsilon$ can be computed with high probability in time $\poly(n,d,\frac 1 \varepsilon, r)$ where $r$ bounds the $\ell_2$ norm of the log-likelihoods of $X_1,...,X_n$ under $p^\star$.
\end{restatable}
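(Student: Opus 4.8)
The plan is to recast the log-concave MLE as a finite-dimensional convex program and solve it by a cutting-plane method whose separation/subgradient oracle is implemented with a polynomial-size linear program together with sampling from a log-concave distribution supported on the convex hull of the data. Following Cule et al.\ \cite{cule2010maximum,cule2010theoretical}, the logarithm of the MLE density is a piecewise-linear concave ``tent function''; parametrizing such functions by the vector $y\in\mathbb{R}^n$ of heights at the data points, set $K:=\mathrm{conv}(X_1,\dots,X_n)$ and $h_y(x):=\sup\{\sum_i\lambda_i y_i:\lambda\ge 0,\ \sum_i\lambda_i=1,\ \sum_i\lambda_i X_i=x\}$ for $x\in K$, so that $h_y$ is concave in $x$, convex in $y$, and $Z(y):=\int_K e^{h_y(x)}\,dx$ is convex in $y$. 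The MLE equals $e^{h_{y^\star}}$ where $y^\star$ minimizes the convex function $G(y):=Z(y)-\tfrac1n\sum_i y_i$; the stationarity conditions $\partial_{y_i}Z(y^\star)=\tfrac1n$ force $Z(y^\star)=1$ and $h_{y^\star}(X_i)=y_i^\star$ for all $i$, so $l(p^\star)=\sum_i y_i^\star=n-nG(y^\star)$ and, by hypothesis, $\|y^\star\|_2\le r$. The point that makes this tractable is that, although $h_y$ is determined by a triangulation of $K$ with up to $n^{\Theta(d)}$ cells, for any single $x$ the value $h_y(x)$ and an optimal weight vector $\lambda^\star(x)$ achieving it come from one linear program of size $\poly(n,d)$; and, by a form of Danskin's theorem, a subgradient coordinate of $Z$ at $y$ is $\partial_{y_i}Z(y)=\int_K e^{h_y(x)}\lambda^\star_i(x)\,dx=Z(y)\,\mathbb{E}_{x\sim q_y}[\lambda^\star_i(x)]$, where $q_y\propto e^{h_y}$ is log-concave on $K$.

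\textbf{Algorithm.} Assume (as is standard, else the MLE is degenerate) that the $X_i$ affinely span $\mathbb{R}^d$, apply a linear map putting $K$ into a well-conditioned position, and rescale so $\mathrm{vol}(K)$ is normalized; since $h_y$, $q_y$, $Z$, $G$ all transform predictably under affine maps this costs only easily tracked Jacobian factors. Then run a cutting-plane method (ellipsoid suffices) to minimize $G$ over the convex body $\mathcal{Y}:=\{y:\|y\|_2\le O(r)\}\cap\{y:Z(y)\le 2\}$, which contains $y^\star$; the constraint $Z(y)\le 2$ keeps $\|\nabla G\|=O(\sqrt n)$ on $\mathcal Y$, which is what prevents the cost from being exponential in $r$. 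The separation/subgradient oracle at a query $y$: estimate $Z(y)$ to small relative error by the standard simulated-annealing (telescoping-ratio) estimator for log-concave integrals, which needs $\poly(d,r)$ temperature stages because the range of $h_y$ on $K$ is $O(r)$; draw $x_1,\dots,x_m\sim q_y$ with a log-concave sampler whose mixing time is $\poly(d,r)$ after the rounding step and from the warm starts produced by the annealing chain; compute each $\lambda^\star(x_j)$ by LP; and return either a separating hyperplane with normal $(\widehat{\mathbb{E}}[\lambda^\star_i])_i$ for the constraint $Z(y)\le 2$ (if the estimate of $Z(y)$ exceeds $2$) or the approximate subgradient $\widehat{Z}(y)\cdot(\tfrac1m\sum_j\lambda^\star_i(x_j))_i-\tfrac1n\mathbf{1}$ of $G$. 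Since a subgradient accurate to within $\eta$ suffices once the working ellipsoid has width of order $\eta$, taking $m=\poly(n/\varepsilon)$ samples and relative accuracy $\poly(\varepsilon/n)$ for $Z(y)$ is enough, the number of outer iterations is $\poly(n,\log(r/\varepsilon))$, and the total running time is $\poly(n,d,\tfrac1\varepsilon,r)$. Finally output $p:=e^{h_{\hat y}}/\widehat{Z}(\hat y)$ from the returned point $\hat y$, using a high-accuracy relative estimate of $Z(\hat y)$.

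\textbf{From objective value to log-likelihood.} Because $h_{\hat y}(X_i)\ge\hat y_i$ and $\log t\le t-1$, a short computation gives $l(p)\ge n-nG(\hat y)$, while $l(p^\star)=n-nG(y^\star)$; hence an additive error $\varepsilon/n$ in the convex objective yields $l(p^\star)-l(p)\le\varepsilon$, with the $O(\varepsilon/n)$ slack from the estimate of $Z(\hat y)$ (and from any enlargement of $\mathcal Y$) absorbed by rescaling $\varepsilon$. Since each sampling and integration subroutine can be made to fail with probability below any inverse polynomial at polynomial cost, a union bound over the $\poly(n,\log(r/\varepsilon))$ oracle calls gives the high-probability guarantee.

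\textbf{Main obstacle.} The crux is the quantitative analysis of this oracle: bounding the mixing time of the sampler for $q_y$ uniformly over $\mathcal Y$ in terms of the conditioning of $K$ and the $O(r)$ range of $h_y$; controlling the bias of the approximate sampler and the variance of the subgradient estimate tightly enough that the cutting-plane iteration still converges at the claimed rate; verifying that the restriction $Z(y)\le 2$---the device that keeps every cost polynomial rather than exponential in $r$---does not exclude near-optimal solutions; and confirming that the rounding reduction and the final renormalization to a bona fide density each cost only $O(\varepsilon)$ in log-likelihood.
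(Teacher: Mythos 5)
Your proposal is sound in outline but takes a genuinely different route from the paper. The paper optimizes over the log-partition function $A(y)=\log Z(y)$ rather than $Z(y)$ itself: after proving $A$ is convex and that $\mathbb E_{x\sim p_y}[T_y(x)]$ is a subgradient (your $\lambda^\star(x)$ is exactly the paper's ``polyhedral sufficient statistic'' $T_y(x)$), it runs plain stochastic gradient descent on $\langle\frac 1n\mathbbm 1_n,y\rangle-A(y)$ with step size $1/\sqrt t$, using a \emph{single} hit-and-run sample per iteration. Because the stochastic gradient $\frac 1n\mathbbm 1_n-T_y(s)$ is a difference of two probability vectors it is automatically $O(1)$-bounded, so no analogue of your $Z(y)\le 2$ trust region is needed and the partition function never has to be estimated during the optimization; convergence follows from the Shamir--Zhang bound with domain diameter $D=O(r)$, giving $\poly(r,1/\varepsilon)$ iterations. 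Your cutting-plane formulation buys a $\log(1/\varepsilon)$ outer iteration count but pays with a relative simulated-annealing estimate of $Z(y)$ and $\poly(n/\varepsilon)$ samples at every oracle call, and it needs the $Z(y)\le 2$ constraint precisely because $\nabla Z$ (unlike the paper's bounded stochastic gradient) is a priori unbounded; if you pursue this route you should also record that $\{Z\le 2\}$ contains a ball of near-optimal points so the ellipsoid volume argument terminates, which holds since $\sum_i\partial_{y_i}Z=Z\le 2$ makes $Z$ Lipschitz there and $Z(y^\star)=1$. Your passage from objective suboptimality to log-likelihood via $\log t\le t-1$ is correct and matches the accounting the paper does implicitly through $A$. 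Both arguments ultimately rest on the same two structural facts --- convexity of the normalizer in $y$ and the Danskin-type identity expressing its subgradient as an expectation of the LP weights under the current tent density --- and both land at $\poly(n,d,\frac 1\varepsilon,r)$.
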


It is important to note the magnitude of $n$ above. In the worst case, the value of $n$ must be at least $1 / \epsilon^{\frac{d+1}{2}}$ for the log-concave MLE to have converged information theoretically. The previous algorithm by Cule et al. must take at least $n^{\frac d 2} > 1/ \epsilon^{\frac{d^2}{4}}$ arithmetic operation. Our algorithm requires at most $O_\epsilon(1/\epsilon^{O(d)} poly(d))$ arithmetic operations where $O_\epsilon$ hides the radius. Note that since the previous algorithm was also a first order method, a standard analysis would have a similar dependence on $r$ (though no such analysis was provided).

\subsection{Overview}
The key insight underlying the efficient algorithm is a new geometric characterization of the solutions to the log-concave MLE. The solutions to log-concave MLE are contained within a class of distributions known as \emph{tent distributions}, whose log-likelihoods correspond to polyhedra \cite{cule2010maximum}. Our contribution lies in observing that while tent distributions are not an exponential family, they ``locally" retain many properties of exponential families. In fact, tent distributions can be viewed as the union of a finite collection of exponential families that share a log-partition function. This union preserves the following properties of the maximum likelihood geometry that makes maximum likelihood estimation tractable:
\begin{enumerate}
    \item The objective is convex.
    \item Samples from the distributions can be used to compute unbiased estimates of the gradient of the likelihood with respect to a natural parameterization.
\end{enumerate}
Finally, we show that the solution to the log-concave MLE is a particular solution to the tent-density maximum likelihood problem.

It is known that sample access to an exponential family leads to a simple stochastic gradient descent based algorithm for computing maximum likelihood estimates \cite{wainwright2008graphical}. The algorithm maintains a distribution (from the hypothesis class) at each iteration and generates a single sample from this distribution. The computational efficiency follows from the convexity of the the log-likelihood function and the fact that an unbiased estimate of the gradient can be computed from a sample of the distribution corresponding to the current iteration.

The ``exponential form" of tent distributions developed in this paper retains many properties of the exponential family. In section \ref{sec:tentgeo}, we show that the exponential form of tent distribution maximum likelihood also results in a convex optimization. In Section \ref{sec:suffstat}, we develop the notion of the polyhedral sufficient statistic. The polyhedral sufficient statistic allows us to compute the density of a point while only depending on the parameters through a combinatorial property of the tent function known as the regular subdivision. Over regions of the parameter space where the regular subdivision remains fixed, tent distributions form true exponential families. We show that since the same log-partition function is shared across all tent distributions, samples can be used to compute unbiased estimators of the gradient just as for exponential families. In practice this means that the same algorithm that we would expect to use for computing the MLE for exponential families can be used to compute the MLE for tent distributions.

However, the tent distribution MLE problem is \emph{not} the same as the log-concave MLE problem. In section \ref{sec:tentgeo}, we provide a characterization of the log-concave MLE as a particular solution to the tent distribution MLE problem. Beyond the computational implications, this characterization also helps motivate why the log-concave MLE makes sense as a method for log-concave density estimation. The log-concave MLE is the distribution with the uniform polyhedral sufficient statistic.

\subsection{Future Directions}
In this work we demonstrate:
\begin{enumerate}
    \item A faster algorithm for computing the log-concave MLE.
    \item That the machinery developed for exponential families can also be applied to a significantly broader class of distributions.
\end{enumerate}
The geometry of exponential family maximum likelihood estimation has been extensively studied.  For example, we have natural stochastic oracles for both the gradient and Hessian of the objective function (see e.g.~\cite{wainwright2008graphical}).  Despite this, we still lack algorithms that converge faster than at a sublinear rate.  It is certainly plausible that faster algorithms exist, perhaps either of the form of a stochastic Newton-type algorithm, or via a cutting plane method in an appropriate metric.

In this paper we show that many of the geometric properties that make exponential family maximum likelihood optimizations tractable also extend to a more general class of distributions.  This prompts two, interrelated questions: \emph{What is the broadest class of distributions that admits these properties?}  \emph{And which of the algorithmic tools developed for exponential families can be applied to this larger class?}

\subsection{Preliminaries}
We say that a probability density $p(x)$ is \emph{log-concave} if $\log p(x)$ is concave. The log-concave MLE of a set of points $X_1,...,X_n \in \mathbb R^d$ is the log-concave density $\hat p(x)$ such that $\prod\limits_i \hat p (X_i)$ is maximized.

We say that a probability density $f$ is $C-$isotropic if for any unit vector $u$:
$$\frac 1 C \le  \int (u^T x)^2 d \pi_f (x) \le C$$.

The indicator function for a set $X$ is denoted as follows: $\mathbbm 1_{X}(x) = \begin{cases}1 :& x \in X\\0:& x \not\in X\end{cases}$. For an natural number $n$, the all ones vector in $\mathbb R^n$ is denoted $\mathbbm 1_n $.  We say that a function $f$ is in $C^\infty$ if it is smooth (infinitely differential on its domain).  Throughout, We let $\langle x,y \rangle$ denote the inner product between real vectors $x,y$.

\section{The Geometry of Tent Distributions \label{sec:math}}
Our algorithm relies on a characterization of the geometry of log-concave distributions that maximize the likelihood of a point set. These solutions are always of a particular form, known as tent densities. We begin by introducing exponential families and defining tent densities. We then provide intuition for the algorithm by characterizing how tent distributions preserve the geometry that makes exponential family maximum likelihood estimation tractable.

\subsection{Exponential Families}
In this section we give a brief overview of exponential families that covers just the material necessary to understand this paper. If you are familiar with exponential families we advise you skip to section \ref{sec:tentdef}. If you are not, we recommend you read this section and reference \cite{wainwright2008graphical} for a more complete treatment of exponential families.

An \emph{exponential family} parameterized by $\theta \in \mathbb R^k$ with \emph{sufficient statistic} $T(x)$, with carrier density $h$ measurable and non-negative is a family of probability distributions of the form:
$$ p_\theta (x)  = exp (\langle T(x), \theta \rangle - A(\theta)) h(x)$$
The \emph{log-partition} function $A(\theta)$ is defined to normalize the integral of the density.
$$ A(\theta) = \log \int \exp( \langle T(x), \theta \rangle ) h(x) dx$$
It makes sense to restrict our attention to values of $\theta$ that give a valid probability density. The set of \emph{Canonical Parameters} $\Theta$ is defined such that $\Theta = \{ \theta \mid  A(\theta) < \infty\}$.

We say that an exponential family is \emph{minimal} if $\theta_1 \neq \theta_2$ implies $p_{\theta_1} \neq p_{\theta_2}$. This is necessary and sufficient for statistical identifiability.

We will study the geometry of maximum likelihood estimation for exponential families.

The maximum likelihood parameters $\theta^\star$ for a set of iid samples $X_1,...X_n$ are:
\begin{align}
    \theta^\star &= \argmax_\theta \prod\limits_i p_\theta(X_i) \nonumber \\
                 &= \argmax_\theta \log \prod\limits_i p_\theta(X_i) \nonumber \\
                 &= \argmax_\theta \sum\limits_i \langle T(X_i), \theta \rangle - nA(\theta) - \sum\limits_i \log h(x_i)\nonumber \\
                 &= \argmax_\theta   \left \langle \frac 1 n \sum\limits_i T(X_i), \theta  \right \rangle - A(\theta) \label{opt:explike}
\end{align}
We refer to the optimization in equation \eqref{opt:explike} as the \emph{exponential maximum likelihood optimization}. The last equation helps highlight why $T(x)$ is referred to as the sufficient statistic. No other information is needed about the data points to compute both the likelihood and the maximum likelihood estimator.

One reason why exponential families are important is that the geometry of the optimization in equation \eqref{opt:explike} has several nice properties.
\begin{fact}
$A(\theta)$ satisfies the following properties:
\begin{enumerate}
    \item $A(\theta) \in C^\infty$ on $\Theta$.
    \item $A(\theta)$ is convex.
    \item If the exponential family is minimal, $A(\theta)$ is strictly convex.
    \item $\nabla A(\theta) = \mathbb E_{x \sim p(\theta)} [T(x)]$.
\end{enumerate}
\end{fact}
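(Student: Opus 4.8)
The plan is to route all four claims through the single auxiliary function $Z(\theta):=e^{A(\theta)}=\int\exp(\langle T(x),\theta\rangle)\,h(x)\,dx$, so that $A=\log Z$ with $Z>0$ on $\Theta$. Claim~2 (together with convexity of $\Theta$) falls out of Hölder's inequality with no differentiation: for $\theta=\lambda\theta_1+(1-\lambda)\theta_2$, $\lambda\in(0,1)$, write the integrand defining $Z(\theta)$ as $\big(e^{\langle T(x),\theta_1\rangle}h(x)\big)^{\lambda}\big(e^{\langle T(x),\theta_2\rangle}h(x)\big)^{1-\lambda}$ and apply Hölder with exponents $1/\lambda$ and $1/(1-\lambda)$ to obtain $Z(\theta)\le Z(\theta_1)^{\lambda}Z(\theta_2)^{1-\lambda}$; hence $Z$ is log-convex, so $A$ is convex, and $\Theta$ is convex because finiteness of $Z(\theta_1),Z(\theta_2)$ forces finiteness of $Z(\theta)$. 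The remaining claims come from differentiating $Z$ under the integral sign.

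For claims~1 and~4, fix $\theta_0\in\mathrm{int}(\Theta)$ and choose $\epsilon>0$ small enough that $B(\theta_0,2\epsilon\sqrt{k})\subseteq\Theta$, where $k=\dim\theta$. For $\theta\in B(\theta_0,\epsilon)$, every order-$m$ partial derivative in $\theta$ of the integrand $e^{\langle T(x),\theta\rangle}h(x)$ is a degree-$m$ monomial in the coordinates of $T(x)$ times $e^{\langle T(x),\theta\rangle}h(x)$; using $\lvert\langle T(x),\theta-\theta_0\rangle\rvert\le\epsilon\|T(x)\|_2\le\epsilon\|T(x)\|_1$, the elementary bound $t^m\le C_m e^{\epsilon t}$ for $t\ge0$, and the factorization $e^{2\epsilon\|T(x)\|_1}=\prod_j e^{2\epsilon\lvert T_j(x)\rvert}\le\sum_{s\in\{\pm2\epsilon\}^{k}}e^{\langle s,T(x)\rangle}$, one bounds the absolute value of that derivative by $C_m\sum_{s\in\{\pm2\epsilon\}^{k}}e^{\langle\theta_0+s,\,T(x)\rangle}h(x)$. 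This majorant is integrable (each $\theta_0+s\in\Theta$), independent of $\theta$ on the neighborhood, and serves simultaneously for all orders $m$, so one may differentiate under the integral to every order: $Z\in C^\infty$ on $\mathrm{int}(\Theta)$, hence so is $A=\log Z$, and $\nabla Z(\theta)=\int T(x)\,e^{\langle T(x),\theta\rangle}h(x)\,dx$ gives $\nabla A(\theta)=\nabla Z(\theta)/Z(\theta)=\E_{x\sim p_\theta}[T(x)]$.

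Differentiating once more yields $\nabla^2 A(\theta)=\E_{p_\theta}\!\big[T(x)T(x)^\top\big]-\E_{p_\theta}[T(x)]\,\E_{p_\theta}[T(x)]^\top=\mathrm{Cov}_{x\sim p_\theta}(T(x))\succeq 0$, re-proving claim~2. For claim~3, first suppose the family is not minimal, say $p_{\theta_1}=p_{\theta_2}$ with $\theta_1\ne\theta_2$; dividing the two densities where $h>0$ shows $e^{\langle\theta_1-\theta_2,T(x)\rangle}$ equals the constant $e^{A(\theta_1)-A(\theta_2)}$ for $h$-a.e.\ $x$ (hence $p_\theta$-a.e.\ for every $\theta$, since $e^{\langle T(x),\theta\rangle}>0$), so $A(\theta+t(\theta_1-\theta_2))=A(\theta)+t\,(A(\theta_1)-A(\theta_2))$ is affine in $t$ and $A$ is not strictly convex. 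Conversely, if the family is minimal then for every $v\ne0$ and every $\theta\in\mathrm{int}(\Theta)$ the quantity $\langle v,T(x)\rangle$ cannot be $p_\theta$-a.s.\ constant (otherwise, since $p_\theta$ and $h$ share the same support, the normalization identity would force $p_{\theta+v}=p_\theta$, contradicting minimality), whence $v^\top\nabla^2 A(\theta)v=\mathrm{Var}_{x\sim p_\theta}(\langle v,T(x)\rangle)>0$; thus $\nabla^2 A\succ0$ throughout $\mathrm{int}(\Theta)$ and $A$ is strictly convex.

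The step I expect to be the main obstacle is the analytic bookkeeping of the differentiation-under-the-integral argument: producing one integrable majorant that holds uniformly on a neighborhood of each $\theta_0$ and for all derivative orders at once, which is exactly what the choice of $\epsilon$ relative to $k$, the inequality $t^m\le C_m e^{\epsilon t}$, and the factorization of $e^{2\epsilon\|T(x)\|_1}$ are there to supply. A minor caveat to flag is that this yields $C^\infty$-ness and (under minimality) strict convexity on $\mathrm{int}(\Theta)$ rather than literally on all of $\Theta$; this matches the standard convention, as the natural parameter space is customarily taken to be open and the optimization~\eqref{opt:explike} is performed there. Everything else — the Hölder step and the covariance/support computations — is routine.
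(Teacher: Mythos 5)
Your proof is correct. Note, though, that the paper does not prove this Fact at all: it is stated as standard exponential-family background, with the reader pointed to Wainwright and Jordan for details, and the paper's only original analytic work of this kind is the separate Theorem~\ref{thm:convex} for tent distributions in Appendix~\ref{app:proofs}. Your argument is the standard textbook one and all the steps check out: the H\"older/log-convexity step for claim~2, the dominated differentiation under the integral (with the $\{\pm 2\epsilon\}^{k}$ factorization supplying an integrable majorant at each order) for claims~1 and~4, and the identification of $\nabla^2 A$ with $\mathrm{Cov}_{p_\theta}(T)$ together with the support argument for claim~3. The caveat you flag about $\mathrm{int}(\Theta)$ versus $\Theta$ is real for claims~1 and~4 but matches the usual convention; for claim~3 specifically you could close it entirely by invoking the equality case of H\"older in your first paragraph --- equality forces $e^{\langle T(x),\theta_1\rangle}h(x)\propto e^{\langle T(x),\theta_2\rangle}h(x)$ almost everywhere, hence $\langle \theta_1-\theta_2,T(x)\rangle$ constant $h$-a.e., which your own non-minimality argument rules out --- giving strict convexity on all of $\Theta$ without any Hessian computation.
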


While the distributions we use in this paper are \emph{not} an exponential family, we will show that the corresponding optimization retains all the properties described above except the smoothness. These properties will be the fundamental building blocks of the efficient algorithm presented in section \ref{sec:alg}.

\subsection{Tent Distributions\label{sec:tentdef}}
In this section we define the notation necessary to work with tent densities. Tent densities are notable because the solution of a log-concave maximum likelihood estimation problem is always tent density \cite{cule2010maximum}.

We define tent functions (and, later, subdivisions) using the notation due to \cite{robeva2017geometry}. Take any $X_1,...X_n \in \mathbb R^d$ and corresponding $y_1,...y_n \in \mathbb R$. We refer to the matrix with columns $X_i$ as $X$ and the vector with elements $y_i$ as $y$. The \emph{tent function} $h_{X,y} : \mathbb R^d \rightarrow \mathbb R$ is the pointwise smallest concave function such that $h_{X,y}(X_i) = y_i $. The points $(X_i, y_i)$ are referred to as \emph{tent poles}. Note that the function $h$ is $-\infty$ outside of the convex hull of $X_1,...X_n$ and its graph is a polytope. See figure \ref{fig:tent2d} for a side by side tent density and tent function. See figure \ref{fig:tent} for the graph of an example tent function.

When $p_{X,y}(x) = \exp(h_{X,y}(x))$ integrates to one, we refer to it as a \emph{tent density} and the corresponding distribution as a \emph{tent distribution}. The support of a \emph{tent distribution} must be within the convex hull of $X_1,...X_n$.

\begin{figure}
    \centering
    \includegraphics[width=0.5\linewidth]{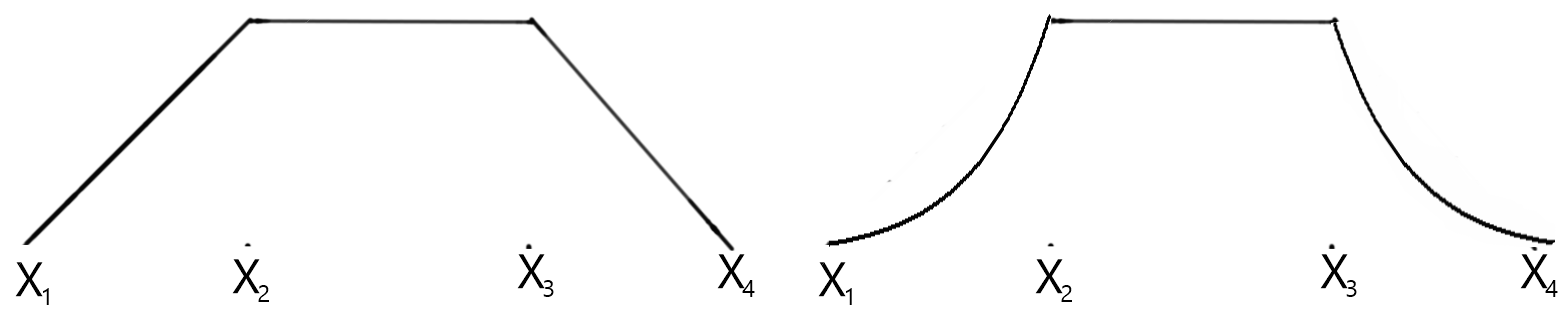}
    \caption{A $2D$ tent function and the corresponding tent density side by side. The two functions are not plotted to scale. }
    \label{fig:tent2d}
\end{figure}

\begin{figure}
    \centering
    \includegraphics[width=\linewidth]{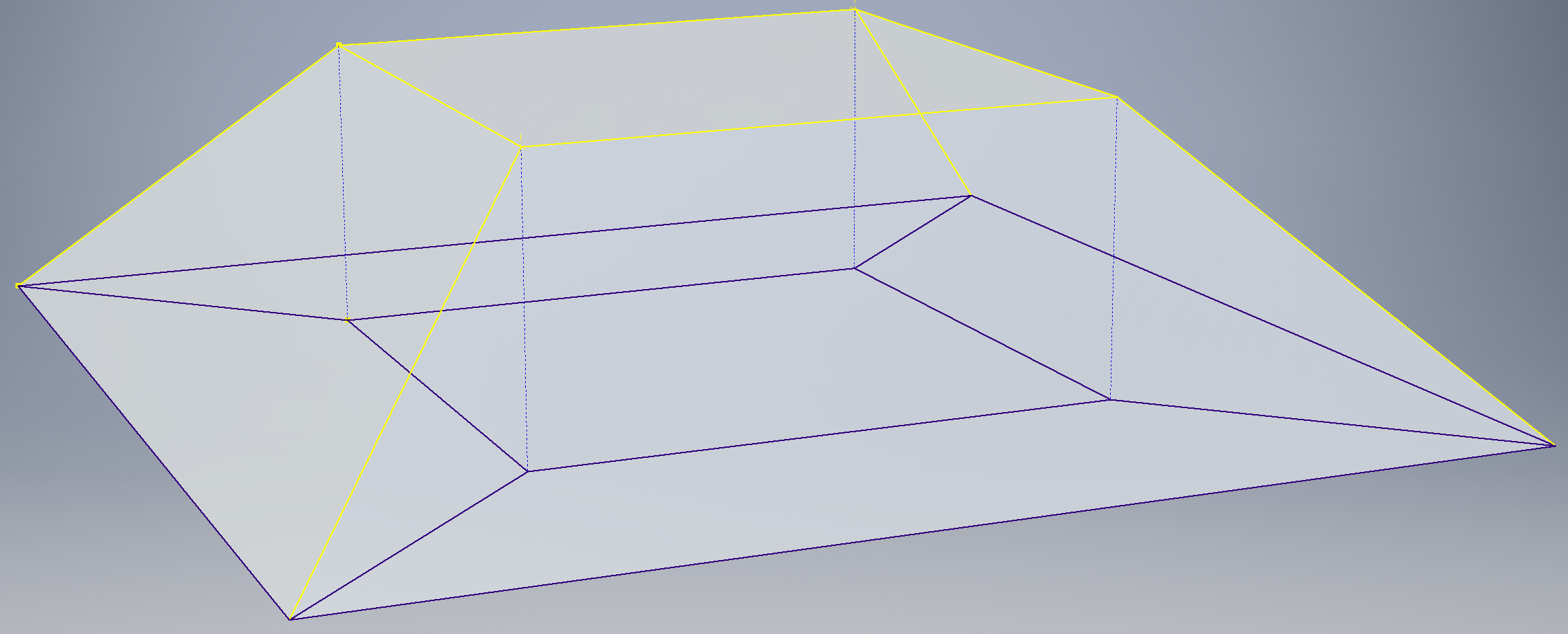}
    \caption{An example of a tent function and its corresponding regular subdivision. }
    \label{fig:tent}
\end{figure}

Recall that tent densities are notable because contain solutions to the log-concave MLE. Consider the log-concave maximum likelihood estimation problem over $X_1,...X_n$. The solution is always a tent-density because tent densities with tent poles $X_1,...,X_n$ are the minimal log-concave functions with log densities $y_1,...y_n$ at points $X_1,...X_n$. A function that was not a tent function would waste density on points that would not improve the likelihood score used in the optimization.

The algorithm which we present can be thought of as an optimization over tent functions. In section \ref{sec:suffstat} we will show a parametric form of tent distributions that looks very similar to an exponential family and suggests that tent distributions retain many important properties of exponential families.

\subsection{Exponential Families and the Polyhedral Sufficient Statistic \label{sec:suffstat}}
In this section we will compare tent distributions to exponential families by defining a sufficient statistic that makes tent distributions ``locally" exponential families. In order to define the sufficient statistic we need to understand the regular subdivision induced by a tent function.

Given a tent function $h_{X,y}$ with $h_{X,y} (X_i) = y_i$, its associated \emph{regular subdivision} $\nabla_y$ of $X$ is a collection of subsets of $X_1,...X_n \in \mathbb R^n$ whose convex hulls are the regions of linearity of $h_{X,y}$. See Figure \ref{fig:tent} for an illustration of a tent function and its regular subdivision. We refer to these polytopes of linearity as \emph{cells}. We say that $\delta_y$ is a \emph{regular triangulation} of $X$ if every cell is a $d-$dimensional simplex.

It is helpful to think of regular subdivisions in the following way: Consider the hyperplane $H$ in $\mathbb R^{d+1}$ obtained by fixing the last coordinate. Consider the function $h_{X,y}$ as a polytope and project each face onto $H$. Each cell is a projection of a face, and together the cells partition the convex hull of $X_1,...,X_n$.  Observe that regular subdivisions may vary with $y$. Figure \ref{fig:changediv} provides one example of how changing the $y$ vector changes the regular subdivision.

\begin{figure}
    \centering
    \includegraphics[width=\linewidth]{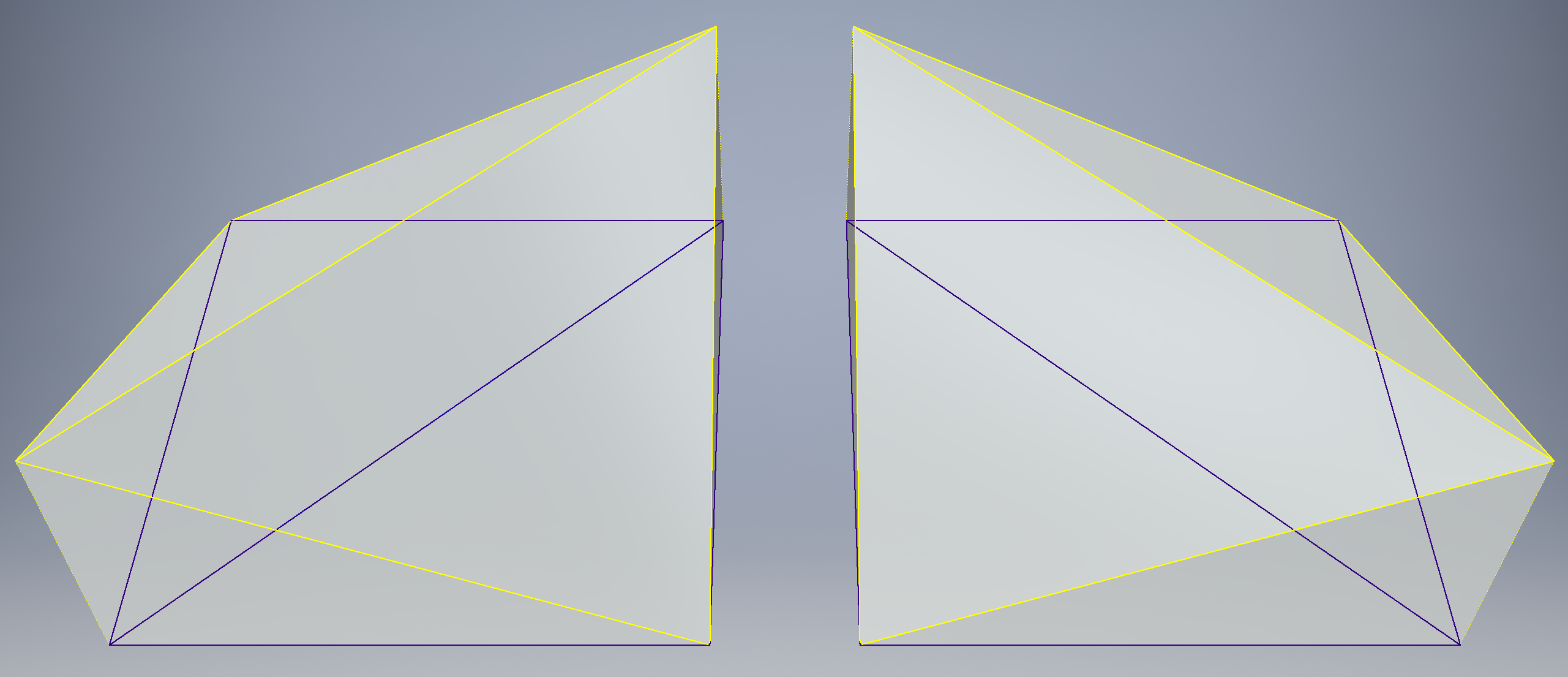}
    \caption{Changing the height of the tent poles can change the induced regular subdivision (shown in purple). }
    \label{fig:changediv}
\end{figure}

For a given regular triangulation $\nabla$, the associated \emph{consistent neighborhood} $N_{\nabla}$  is the set of all $y \in N$,  such that $ \nabla_{y} =\nabla$.  That is, consistent neighborhoods are the sets of parameters where the \emph{regular triangulation} remains fixed. Note that these neighborhoods are open and  their closures cover the whole space.
We note that when $y$ is chosen in general position $\nabla_y$ is always a regular triangulation.

Consider a regular triangulation $\nabla$. The \emph{polyhedral statistic} is the function $$T_y(x): CONV\_HULL(X_1,...X_n) \rightarrow [0,1]^n,$$ that expresses $x$ as a convex combination of corners of the cell containing $x$ in $\nabla_y$. That is $x = X T_y(x)$ where  $||T_y(x)||_1 = 1$ and $T_y(x)_i = 0$ if $X_i$ is not a corner of the cell containing $x$. The polyhedral statistic gives an alternative way of writing tent functions and tent densities:
$$ h_{X,y}(x) = \langle T_y(x), y \rangle$$
$$ p_{X,y}(x) = \exp( \langle T_y(x), y \rangle)$$

If we restrict $\theta$ such that $\sum\limits_i \theta_i = 1$ and define $A_y(\theta) = \log \int\limits_x p_{X,y}(x) dx$ then we can see that for every consistent neighborhood $N_\nabla$ we have an exponential family of the form
\begin{align}
     \exp\left ( \langle T_\theta(x), \theta \rangle - A(\theta) \right ) \; \textrm{ for }  \theta \in N_\nabla. \label{eqn:expfam}
\end{align}
While equation \eqref{eqn:expfam} shows how subsets of tent distributions are exponential families, it also helps highlight why tent distributions are \emph{not} an exponential family. The sufficient statistic depends on $y$ through the regular subdivision. This means that tent distributions do not admit the same factorized form as exponential families since the sufficient statistic depends on $y$.

Note that we can use any ordering of $X_1,\ldots,X_n$ to define the polyhedral sufficient statistic everywhere including on regular subdivisions that are \emph{not} regular triangulations. Also note that eliminating the last coordinate using the constraint $\mathbbm 1_n^T \theta = 1$ makes each exponential family minimal. In other words, over regions where the regular subdivision does not change (for example the consistent neighborhoods), tent distributions are minimal exponential families. This means the set of tent distribution can be seen as the finite union of a set of minimal exponential families. We refer to equation \eqref{eqn:expform} as the exponential form for tent densities.
\begin{equation}
    p_{X,y} (x) = \exp\left ( \langle T_y(x), y \rangle - A(y) \right ) \mathbbm 1_{CONV\_HULL(X_1,...,X_n)}(x). \label{eqn:expform}
\end{equation}

\subsection{The Geometry of the Tent Distribution MLE \label{sec:tentgeo}}
Understanding tent distributions as ``almost" being an exponential family is exactly what enables the efficient algorithm for computing the log-concave MLE.

Consider the following optimization over tent distributions for some fixed vector $\mu \in [0,1]^n, ||\mu||_1 = 1$. Note the similarity to the exponential family maximum likelihood optimization.
\begin{equation}
    \theta^\star = \argmax\limits_{\theta} \left (\langle \mu, \theta \rangle - A(\theta) \right )
\end{equation}

By definition, $A (\theta) = \log \int \exp (\langle \mu, \theta \rangle) dx$, which is, up to a linear component, exactly the logarithm of the function $\sigma$ studied in \cite{cule2010maximum}. We use the fact that $\exp(A(\theta))$ is convex combined with the exponential family geometry to prove that $A(\theta)$ is convex.   This allows us to prove that the geometry of the log-partition function of tent distributions behaves, in many ways, similarly to that of exponential families.
\begin{restatable}{theorem}{thmtentexp}\label{thm:convex}
 $A(\theta)$ is convex and $\mathbb E_{x \sim p_\theta}[T(x)] \in \partial_\theta A(\theta)$.
\end{restatable}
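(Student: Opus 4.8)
The plan is to establish the two claims separately, but using the same underlying fact: that $\exp(A(\theta))$ is a convex function of $\theta$ on the affine slice $\{\theta : \mathbbm 1_n^T\theta = 1\}$. First I would observe that $\exp(A(\theta)) = \int_{\mathrm{CONV\_HULL}(X_1,\dots,X_n)} \exp(h_{X,\theta}(x))\,dx = \int \exp(\langle T_\theta(x),\theta\rangle)\,dx$ is, up to the fixed linear normalization, exactly the function $\sigma$ analyzed by Cule et al.~\cite{cule2010maximum}, which they show is convex (indeed it is the integral of $\exp$ composed with the tent function, and the tent function $h_{X,\theta}(x)$ is concave in $x$ and, crucially, \emph{convex} in $\theta$ for each fixed $x$ because it is a pointwise infimum of affine functions of $\theta$ — namely, of all affine functions dominating the tent poles; so $\exp(h_{X,\theta}(x))$ is a convex function of $\theta$, and integrating over $x$ preserves convexity). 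Having $g(\theta) := \exp(A(\theta))$ convex and positive, I then want to upgrade to convexity of $A(\theta) = \log g(\theta)$. This does not follow from log-convexity in general, so here is where the exponential-family structure enters: on each consistent neighborhood $N_\nabla$, equation \eqref{eqn:expfam} exhibits $p_\theta$ as a genuine minimal exponential family with log-partition $A$, and by the Fact quoted in the excerpt, $A$ is (strictly) convex on $N_\nabla$. Since the closures of the $N_\nabla$ cover the whole parameter space, $A$ is a continuous function that is convex on each piece of a polyhedral partition; convexity on the union then follows from the standard fact that a continuous function which is convex on each cell of a locally finite polyhedral decomposition, and globally, e.g., mid-point convex along every segment, is convex — but to get midpoint convexity cleanly across cell boundaries I would instead argue directly from convexity of $g = e^A$ together with the fact that $\nabla A$ exists and is monotone on the interior of each cell (again by the exponential-family Fact, $\nabla A(\theta) = \mathbb E_{p_\theta}[T_\theta(x)]$ there), and these one-sided gradients match up continuously; alternatively, restrict $A$ to an arbitrary line, where it becomes a piecewise-smooth scalar function whose second derivative is nonnegative on each piece and whose first derivative is continuous, hence convex.

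For the subgradient claim $\mathbb E_{x\sim p_\theta}[T_\theta(x)] \in \partial_\theta A(\theta)$: at a point $\theta$ in the interior of a consistent neighborhood this is exactly the exponential-family identity $\nabla A(\theta) = \mathbb E_{p_\theta}[T(x)]$ from the Fact, and since $A$ is differentiable there the subdifferential is the singleton $\{\nabla A(\theta)\}$, so there is nothing more to do. The remaining case is $\theta$ on the boundary between cells. I would handle it by a limiting argument: take a sequence $\theta_k \to \theta$ with each $\theta_k$ interior to some consistent neighborhood (possible since the interiors of the cells, together, are dense), so $\nabla A(\theta_k) = \mathbb E_{p_{\theta_k}}[T_{\theta_k}(x)]$. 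The left side converges to some element of $\partial A(\theta)$ by upper semicontinuity / closedness of the subdifferential of a convex function (using the convexity just established and continuity of $A$). For the right side I need $\mathbb E_{p_{\theta_k}}[T_{\theta_k}(x)] \to \mathbb E_{p_\theta}[T_\theta(x)]$; this follows because $T_\theta(x)$ depends on $\theta$ only through the regular subdivision, the integrand $\exp(\langle T_\theta(x),\theta\rangle)$ and the vector $T_\theta(x)$ are bounded ($\|T\|_1 = 1$) and change only on a measure-zero set of $x$ as $\theta$ varies (the cell walls), so dominated convergence applies. Equating the two limits gives $\mathbb E_{p_\theta}[T_\theta(x)] \in \partial A(\theta)$.

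The main obstacle, I expect, is the gluing step for convexity across the boundaries of the consistent neighborhoods — convexity on each piece of a partition plus continuity is \emph{not} by itself enough (consider $-|x|$ restricted to the two halflines), so the argument genuinely needs the extra leverage of $e^A$ being convex (this rules out the concave "kink" and forces the one-sided derivatives to be ordered the right way at each wall). I would make this precise by reducing to one dimension along an arbitrary segment and checking that at each breakpoint $t_0$ the left derivative $A'(t_0^-)$ does not exceed the right derivative $A'(t_0^+)$: if it did, then near $t_0$ we would have $A$ locally concave-looking, and since $A$ is smooth with $A'' \ge 0$ on either side this can only fail \emph{at} $t_0$, i.e.\ we'd need $A'(t_0^-) > A'(t_0^+)$; but then $g = e^A = $ (positive) $\cdot\, e^A$ would have $g'(t_0^-) = g(t_0)A'(t_0^-) > g(t_0)A'(t_0^+) = g'(t_0^+)$, contradicting convexity of $g$. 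This closes the argument. A secondary, purely bookkeeping, subtlety is that everything lives on the affine constraint $\mathbbm 1_n^T\theta = 1$ and that the polyhedral statistic must be defined consistently (via a fixed ordering of the $X_i$) on subdivisions that are not triangulations, as noted in the excerpt; I would simply carry the constraint throughout and invoke that fixed-ordering convention so that $T_\theta$ is well-defined everywhere.
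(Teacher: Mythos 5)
Your proposal is correct and follows essentially the same route as the paper: convexity of $\exp(A)$ (Cule et al.'s $\sigma$), smoothness and convexity of $A$ on each consistent neighborhood via the exponential-family structure, a one-dimensional argument at cell walls using $g'(t_0^{\pm}) = g(t_0)\,A'(t_0^{\pm})$ together with convexity of $g$ to rule out a concave kink $A'(t_0^-) > A'(t_0^+)$, and the subgradient identity obtained as the limit of gradients approaching from an adjacent consistent neighborhood. One minor slip in a justification: for fixed $x$ the tent value $h_{X,\theta}(x)$ is convex in $\theta$ because it is the optimal value of the linear program $\max\{\langle \alpha,\theta\rangle : X\alpha = x,\ \mathbbm 1_n^T\alpha = 1,\ \alpha \ge 0\}$, i.e.\ a pointwise \emph{supremum} of linear functions of $\theta$ over a fixed feasible set --- a pointwise infimum of affine functions of $\theta$, as you wrote, would be concave.
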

Please see Appendix \ref{app:proofs} for a proof.

We are now able to characterize the solution to the log-concave MLE in terms of the polyhedral sufficient statistic.  Let $\hat \theta$ be the parameters of the log-concave MLE expressed in the exponential form of tent distributions. Then
\begin{align}
   \hat \theta &= \argmax\limits_\theta \left \langle \frac 1 n \sum\limits_i T(X_i), \theta \right \rangle   - A(\theta) \nonumber\\
   &= \argmax\limits_\theta \left \langle \frac 1 n \mathbbm{1}, \theta \right \rangle  - A(\theta) \label{opt:mletentmath}
\end{align}
That is, the log-concave MLE is the tent function with polyhedral sufficient statistic equal to $\frac 1 n \mathbbm{1}$. This means we can think of the log-concave MLE as the the tent distribution which has most even support over its tent poles.  Furthermore, we  observe that since $\mathbbm 1^T \theta = 1$, equation \eqref{opt:mletentmath} is equivalent to simply minimizing the log partition function.

\subsection{Tent Distributions and Geometric Combinatorics}
The geometric aspect of this paper can be thought of as a continuation of the work started by in \cite{robeva2017geometry}. They observe that solving the log-concave  MLE problem over a \emph{weighted} set of samples has a strong connection to geometric combinatorics.

They examined the relationship between the weight vector and the combinatorial structure of the tent density maximizing the weighted likelihoods. In particular, they showed that different values of the weight vector can induce every regular subdivision.
In the context of our work, the weight vector can be thought of as an expected value of the polyhedral statistic. Then their work can be interpreted as studying the map from the expected value of the polyhedral sufficient statistic to regular subdivisions. The consistent neighborhoods are exactly the dual of the preimages of regular triangulations under this map.

Also of interest is the Samworth body studied by Robeva et al. The Samworth Body $S(x)$ is defined as follows:
$$ S(X) = \{ y \mid  \int \exp (h_{X,y} (x)) dx \le 1 \}$$
Points on the boundary of the Samworth body correspond to tent densities. This means that parameters of the exponential form of tent distributions are in bijection with the boundary of the Samworth body using the following map $y \rightarrow y - A(y)$.

\subsection{Algorithmic Toolkit \label{sec:toolkit}}
The algorithm presented in Section \ref{sec:alg} requires sample access to a tent distribution. Implementing this requires several oracles for tent distributions which are briefly describe below. Every oracle is implemented in detail in Appendix \ref{app:oracles}.

\subsubsection{Relative Density Queries}
The unscaled density $A(y) p_{X,y}(x) $ can be computed using a linear program. Recall that a tent function is defined as the minimal concave function containing the tent poles. This means the density value at a particular point may be computed by finding the largest value still in the convex hull of the tent poles.

\subsubsection{Polyhedral Statistic}
The polyhedral statistic can also be computed using the same linear program used to compute density queries. Instead of using the $y$ value of the point, we use the vector used to express it as a convex combination of tent poles. This is sufficient for $y$ chosen in general position resulting in $\nabla_y$ being a regular triangulation.

Even when $y$ is not chosen in general position, it is possible to modify the procedure to return a particular polyhedral statistic. However, this is not strictly necessary for the correctness of the algorithm we present since any valid polyhedral statistic will be a subgradient. In fact taking the union over different ways to compute the statistic will give us a set that spans the subgradients.

\subsubsection{Line-Restricted Sampling}
In order to be able to generate samples from a tent distribution we will have to sample along the tent density restricted to an arbitrary line. It turns out that restricting the tent density to a line yields another tent distribution with at most $n$ tent poles. We combine this with a characterization of the measure of pieces of a $2D$ tent distribution from \cite{cule2008auxiliary} to create an exact sampler for this restricted distribution.

\subsubsection{Sampling from Tent Functions}
In order to be able to generate samples from tent functions we use a hit and run random walk with analysis from \cite{lovasz2007geometry}.
\begin{enumerate}
    \item Choose a line uniformly at random from the current point.
    \item Sample from the density restricted to this line
    \item return to step 1
\end{enumerate}

Making this random walk mix quickly requires finding a transformation that makes the tent density almost isotropic (equivalent to computing its second moment). This and a complete analysis of the random walk are presented in Appendix \ref{app:oracles}.

\subsubsection{Evaluating the Log-Partition Function}
It is important that we use a different representation of tent distributions than used by \cite{cule2010maximum}. Our representation has a strong connection to exponential families and readily admits many important queries including the following:
\begin{enumerate}
    \item Unscaled Density Queries
    \item First and Second Moments
    \item Sampling
    \item Linearly-Restricted Distributions
\end{enumerate}

However, for completeness, we include a method to evaluate an additive approximation  of the log-partition function. Converting from the exponential form to the representation used by Cule et al. requires adding the log-partition function to the parameter vector.

To evaluate the log-partition function we imitate Lebesgue integration of the tent function. We take advantage of the fact that the density is quasi-concave. We divide up the graph into thin slices and compute the volume of each slice. Adding them up yields an approximation of the volume. We use a property of log-concave distributions to bound the number of slices necessary to compute a good approximation.

\section{Method \label{sec:alg}}
Our algorithm works by applying stochastic gradient descent to the optimization in equation \eqref{opt:explike} with a uniform polyhedral statistic. The stochastic gradient is computed by evaluating the polyhedral statistic on a sample from the current tent function. The psuedocode is presented in \ref{sec:pseudocode}. Recall that these samples can be generated within total variation distance $\epsilon$ in time $\poly(n,d, \frac 1 \epsilon)$ using the oracle described in Appendix \ref{app:oracles}.  Section \ref{sec:analysis} will bound the convergence rate of this stochastic gradient descent.

\subsection{Pseudocode\label{sec:pseudocode}}
The algorithm itself is quiet simple and described in algorithm \ref{alg:complete}.
\begin{algorithm}
  \caption{Compute the log-concave maximum likelihood \label{alg:complete}}
  \begin{algorithmic}[1]
	\Function{ComputeLogConcaveMLE}{$X_1,...X_n, m$}
	\State $y \gets \frac 1 n\mathbbm 1_n$
	\For{$i\gets 1,m$}
	    \State $\eta \gets  1 / \sqrt i$
	    \State $s \sim p_{X,y}$
	    \State $y \gets y + \eta \left  (\frac 1 n\mathbbm 1_n - T_y(s) \right)$
	\EndFor
	\State \Return $y$
	\EndFunction
    \Statex
  \end{algorithmic}
\end{algorithm}
Note that two steps are abstracted away above: sampling from the tent function and computing the polyhedral sufficient statistic. Sampling from the tent function can be done using a hit and run random walk. This is described briefly in section \ref{sec:toolkit} and in detail in Appendix \ref{app:oracles}. The polyhedral sufficient statistic can be computed using a linear program. This linear program is summarized in section \ref{sec:toolkit} and described in detail in Appendix \ref{app:oracles}.

\subsection{Main Analysis \label{sec:analysis}}
In this section we prove the main theorem.

\thmalgruntime*

Recall that algorithm \ref{alg:complete} is simply applying stochastic gradient descent to the following function:
$$h(y) = \left \langle \frac 1 n \mathbbm 1_n, y \right \rangle - A(y) $$
Recall from theorem \ref{thm:convex} that $h$ is convex. In order to analyze the convergence rate we rely on an analysis due to \cite{shamir2013stochastic}.

\begin{lemma}[\cite{shamir2013stochastic}] \label{lemma:sgd}
Suppose that $F$ is convex with domain $\mathcal W$, $g_t$ is the stochastic gradient at iteration $t$, and that for some constants $D, G$, it holds that $\mathbb E[||g_t||] \le G^2$ for all $t$, and $\sup_{w, w' \in \mathcal W} || w - w'|| \le D$ Consider stochastic gradient descent with step sizes $\eta_t = c/\sqrt t$ where $c > 0 $ is a constant. Then for any $T> 1$, it holds that
$$ \mathbb E [F(w_T) - F(w^\star)] \le \left ( \frac{D^2}{c} + c G^2 \right )\frac{2 + \log T}{\sqrt T} $$
\end{lemma}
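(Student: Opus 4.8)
The plan is to prove the stated bound for \emph{projected} stochastic gradient descent, where $w_{t+1} = \Pi_{\mathcal W}(w_t - \eta_t g_t)$, $\Pi_{\mathcal W}$ is the Euclidean projection onto $\mathcal W$, and $g_t$ is an unbiased estimate of a subgradient $\hat g_t \in \partial F(w_t)$ (I read the hypothesis $\mathbb E\|g_t\| \le G^2$ as the standard, dimensionally correct second-moment bound $\mathbb E\|g_t\|^2 \le G^2$). The backbone is the one-step inequality: expand $\|w_{t+1}-u\|^2$, use nonexpansiveness of the projection, take conditional expectation, and invoke convexity $\langle \hat g_t, w_t - u\rangle \ge F(w_t)-F(u)$. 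This gives, for every $u \in \mathcal W$ that is measurable with respect to the randomness generated before step $t$, the estimate $\mathbb E[F(w_t)-F(u)] \le \frac{1}{2\eta_t}\big(\mathbb E\|w_t-u\|^2 - \mathbb E\|w_{t+1}-u\|^2\big) + \frac{\eta_t G^2}{2}$.

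First I would record a ``suffix-sum'' estimate. Writing $z_t = \mathbb E[F(w_t)-F(w^\star)]$, I sum the one-step inequality with $u=w^\star$ over $t=T-k+1,\dots,T$ and apply Abel summation to the telescoping distance terms; since $\mathbb E\|w_t-w^\star\|^2 \le D^2$ and $1/\eta_t$ is increasing, the weighted telescope collapses to $\sum_{t=T-k+1}^T z_t \le \frac{D^2}{2\eta_T} + \frac{G^2}{2}\sum_{t=T-k+1}^T \eta_t$. For $k=T$ this already yields the classical average-iterate bound $\frac1T\sum_t z_t = O\!\big((D^2/c + cG^2)/\sqrt T\big)$ once $\eta_t = c/\sqrt t$ and $\sum_{t\le T}1/\sqrt t = O(\sqrt T)$ are substituted.

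The heart of the argument, and the step I expect to be the main obstacle, is upgrading this \emph{average}-iterate bound to the \emph{last} iterate. I would introduce the suffix averages $S_k = \frac1k\sum_{t=T-k+1}^T z_t$, which satisfy the exact identity $S_k - S_{k+1} = \frac{S_k - z_{T-k}}{k+1}$, so that telescoping from the controlled full average $S_T$ down to $S_1 = z_T$ reduces everything to bounding $S_k - z_{T-k}$. To control this I apply the one-step inequality again, but now with comparison point $u = w_{T-k}$; this is legitimate precisely because $w_{T-k}$ is measurable with respect to the filtration before every step $t > T-k$, so the conditional-expectation/convexity step still goes through with a \emph{random, previously visited} anchor. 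Summing over $t=T-k+1,\dots,T$ turns the left side into $k(S_k - z_{T-k})$, and the delicate point is the distance terms: the one-step displacement $\mathbb E\|w_{T-k+1}-w_{T-k}\|^2$ is only $O(\eta_{T-k}^2 G^2)$, so bounding \emph{every} term crudely by $D^2$ would reintroduce a spurious $D^2/\eta_T$ and destroy the rate; instead the Abel summation must be kept, charging only the increments $(\eta_t^{-1}-\eta_{t-1}^{-1})$ against $D^2$. This gives $S_k - z_{T-k} = O\!\big(\frac{D^2}{c\,k}(\sqrt T - \sqrt{T-k}) + \frac{G^2}{k}\sum_{t=T-k+1}^T \eta_t\big)$.

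Finally I would assemble the pieces. Substituting $\eta_t = c/\sqrt t$, bounding $\sqrt T - \sqrt{T-k} \le k/\sqrt T$ and $\sum_{t=T-k+1}^T \eta_t \le 2ck/\sqrt T$, I obtain $S_k - z_{T-k} = O\!\big(\frac{D^2}{c\sqrt T} + \frac{cG^2}{\sqrt T}\big)$, whence $z_T - S_T = \sum_{k=1}^{T-1}\frac{S_k - z_{T-k}}{k+1}$ is controlled by $\big(\frac{D^2}{c}+cG^2\big)\frac1{\sqrt T}\sum_{k=1}^{T-1}\frac{1}{k+1} = O\!\big((\tfrac{D^2}{c}+cG^2)\tfrac{\log T}{\sqrt T}\big)$, using $\sum_{k\le T}1/(k+1) = O(\log T)$. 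Adding the average-iterate bound on $S_T$ from the second paragraph yields exactly $\mathbb E[F(w_T)-F(w^\star)] \le \big(\frac{D^2}{c}+cG^2\big)\frac{2+\log T}{\sqrt T}$ after tracking the explicit constants. The only genuine subtleties are the measurability justification for anchoring at the random past iterate $w_{T-k}$ and the careful summation-by-parts that prevents the distance terms from dominating the bound.
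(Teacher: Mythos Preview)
Your proposal is correct and, in fact, faithfully reconstructs the proof of Shamir and Zhang: the one-step inequality, the suffix-sum with Abel summation, the last-iterate upgrade via suffix averages anchored at the random past iterate $w_{T-k}$, and the final harmonic sum producing the $\log T$ factor are exactly the ingredients of their argument. The paper itself, however, does not prove this lemma at all---it is quoted verbatim as a black box from \cite{shamir2013stochastic} and invoked only to bound the number of SGD iterations in the proof of the main runtime theorem. So there is nothing to compare against: you have supplied the proof the paper outsources, and the proof you supply is the standard one from the cited reference.
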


This allows us to prove the main theorem.
\begin{proof}
Lemma \ref{lemma:sgd} captures almost everything we need. All that is left is to  $G$ and specify the necessary total variation distances for the hit-and-run sampling.

Notice that at every point the stochastic gradient $g_t = \frac 1 n \mathbbm 1_n - T(x)$ for some $x$. The norm of said quantity, $|\frac 1 n \mathbbm 1_n - T(x)|$ is maximized when $T(x) = e_i$ for some $i$.
Thus $\mathbb E \left [\frac 1 n \mathbbm 1_n - T(x) \right ] < \sup\limits_x \left |\frac 1 n \mathbbm 1_n - T(x) \right | = O(1)$.

The above lets us compute the number of iterations necessary. If $m$ iterations are necessary, running each hit and run to total variation distance $\frac 1 {m^2}$ would mean that with probability $\frac 1 m$ the stochastic gradients would be indistinguishable from the true distribution and the algorithm would succeed with high probability. If the number of iterations is not known apriori, one could guess the number of required iterations. If the guess is too low, the algorithm can be restarted with a factor 2 higher iteration limit. This can be repeated until the desired convergence is achieved with only a constant factor slowdown comparing to knowing the necessary number of iterations in advance.
\end{proof}


\section*{Acknowledgements}
This work was supported by NSF Fellowship grant DGE-1656518, NSF awards CCF-1704417, CCF-1763299 and an ONR Young Investigator Award (N00014-18-1-2295).

\newpage

\appendix
\section{Oracle Implementations \label{app:oracles}}
In this appendix we give a complete description of every oracle described in Section \ref{sec:toolkit}.

\subsection{Density Queries}
The density $p_{X,y}(x)$ can be computed using the following packing linear program:
\begin{align}
&\max y \label{opt:densitylp}\\
x &=\sum\limits_i \alpha_i X_i \nonumber\\
y&=\sum\limits_i \alpha_i y_i  \nonumber\\
1&= \sum\limits_i \alpha_i \nonumber\\
\alpha_i &\ge 0 \nonumber
\end{align}
The the point $y^\star$ that achieves the optimum of \eqref{opt:densitylp} can be used to compute the density as $p_{X,y}(x) = \exp(y^\star)$.

\subsection{Polyhedral Statistic}
The Polyhedral Statistic can also be computed using linear program \eqref{opt:densitylp} by simply returning $\alpha$. Note that if the $y$'s are in general position there will be a unique solution to the linear program.

If the linear program does not have a unique solution and the solver return $\alpha$ with more than $d$ nonzero entries, one can simply add a constraint that forces one of the nonzero entry to be zero and repeat the process. At every stage before a simplex is obtained there exists at one entry for which the linear program will remain solvable.

\subsection{Line-Restricted Sampling}
In this section we give a method for sampling from the distribution with density proportional to $g(x) = \exp (h_{X,y}(x_0 + t \theta)) $ for $x_0$ and $\theta$ chosen in general position.

First note that $g(x)$ is itself a 1-dimensional unscaled tent-density. Since $\theta$ induces an order on tent poles of $h$ (by sorting via $\theta^T X_i$), $g$ has at most $n$ tent poles. We can compute these tent poles and then sample exactly by computing the measure of each segment between tent poles (see formula in \cite{cule2008auxiliary}), sampling a segment, and then sampling on the distribution restricted to the segment. The psuedocode for this process can be in Algorithm \ref{alg:linsampler}.

\begin{algorithm}[H]
  \caption{Sample from $\frac{\exp (h_{X,y}(x_0 + t \theta)) }{\int \exp (h_{X,y}(x_0 + t \theta))dt }$
    \label{alg:linsampler}}
  \begin{algorithmic}[1]
	\Function{sample}{$X_1,...X_n, y_1,...y_n$}
	\State $t' \gets \argmin\limits_t x_0 + t\theta \in CONV\_HULL(X_1,...X_n)$
	\State $z_0 \gets x_0 + t'\theta $ \Comment{$z_0,...z_m, m \le n$ are the tent poles of $g$}.
	\For{$i \gets 1,n$}
	    \State $\beta_j \gets \begin{cases}1&: T(z_{i-1} + \epsilon \theta)_j > 0 \\ 0&: \textrm{otherwise} \end{cases}$ for $j$ $1,...,n$.
	    \State $t' \gets \argmax\limits_{t } z_{i-1} + t \theta = \alpha^T X \beta $ s.t. $\alpha\in[0,1]^m, ||\alpha||=1$.
	    \State $z_i \gets z_{i-1} + t \theta$
	    \If {$z_i \not\in int(CONV\_HULL(X_1,...X_n)$}
	    	\State $m \gets i-1$
	        \State \textbf{break}
	    \EndIf
	\EndFor
	\State $\alpha_i \gets (||z_i - z_{i+1}||)\frac{|g(z_i) - g(z_{i+1})|}{|\log g(z_i) - \log g(z_{i+1})|}$ for $i \gets 0,{m-1}$
	\State $j \gets i$ with probability $\frac{\alpha_i}{\sum\limits_i \alpha_i}$
	\State $p_1 \sim exp(\log g(x_j))$
	\State $p_2 \sim exp(\log g(x_{j+1}))$
	\State \Return $\frac{p_1}{p_1 + p_2}x_j +\frac{p_2}{p_1 + p_2}x_{j+1} $
	\EndFunction
    \Statex
  \end{algorithmic}
\end{algorithm}
\FloatBarrier

\subsection{Sampling from Tent Functions}
In order to be able to generate samples from tent functions we use a hit and run random walk with analysis in \cite{lovasz2007geometry}.
\begin{enumerate}
    \item Choose a line uniformly at random from the current point.
    \item Sample from the density restricted to this line
    \item return to step 1
\end{enumerate}

Consider a tent density $f$ over $\mathbb R^d$ that is at most $C-$isotropic. Let the initial point, $x_0$ be drawn from a distribution at most total variation distance $H$ from $\pi_f$.
Then the distribution of the $m$th sample of the above random walk will be at most total variation distance $\epsilon$ from $\pi_f$ if
$$ m \ge O \left (C^4 H^4 \frac{n^3}{\epsilon^4} \ln^3 \frac{2H}{\epsilon} \right )$$

However, we have no reason to believe that our tent density will be close to isotropic and we pay a polynomial factor in $C$ above. We alleviate this issue by ``rounding" the density \cite{lovasz2007geometry}. That is we find a linear operator $W$ that transforms the density into approximately isotropic position.

Rounding will require a level set separation oracle. The following idea will be sufficient to design the separation oracle for the level set: identify a face of the tent function that separates the point in question and compute it's intersection with the plane of the level set.

Let $\overline X_i = <X_i^1,...X_i^d, y_i>$, and let $y_{max} = \max\limits_i y_i$ be the highest log-density and $X_{max}$ the corresponding tent pole. To compute a separating hyperplane between the a point $Z$ and the $\exp(y_{max}) \delta$ level set compute the following linear program:
\begin{align*}
    \max t\\
    \kappa &= X_{max} + t (Z - X_{max})\\
    \kappa &= \sum\limits_i \alpha_i X_i \\
    1 &= \sum\limits_i \alpha_i\\
    \alpha_i &\ge 0 \; \forall i\\
    \sum\limits_i \alpha_i y_i &\ge y_{max} + \log \delta
\end{align*}
The $\alpha$ vector can be used to identify a $d+1$ dimensional face of the tent function. The intersection of that hyperplane and the hyperplane defined by setting the last coordinate to $y_{max} + \log \delta$ provides a separating hyperplane.  Given this oracle, we can now apply the rounding algorithm \cite{lovasz2007geometry}.

\subsection{Evaluating the Log-Partition Function}
Recall the definition of $A(y)$:
$$ A(y) = \log \int \exp( \langle T(x), y \rangle )  dx$$
Let $p_{max}$ be the maximum value of the unscaled tent density and $h_{max}$, its logarithm, the max of the tent function. In this section, we approximate a Lebesgue integral of $\int \exp( \langle T(x), y \rangle ) $ using slices $[p_{max} (1 - \epsilon)^i, p_{max} (1 - \epsilon)^{i+1})$. We will aim for a $(1 - \epsilon)$ approximation of this volume.

First, however, we must truncate the distribution. For any log-concave \emph{density} $f$ with maximum $M_f$:
$$\int\limits_{f(x) \le M_f \exp (-z)} f(x) dx \le \frac{\epsilon}{2}$$
for any $z \ge 2 \log (2/\epsilon) + d \log (O(d))$ due to Lemma 3.2 by \cite{carpenter2018near}.
This allows us to not worry about tiny level sets since the set of points where the density is low does not contribute much to the integral.

Now, for $i \in \mathbb N$ we examine the corresponding slice and level set, defined as $L_i = \{ x \mid h_{X,y}(x) \ge h_{max} - i\log ( 1 - \epsilon/2) \}$. This is the same as examining the $p_{max} (1 - \epsilon/2)^i$th level set of the unscaled density. By the above lemma is suffices to examine $i \in \left [0,...\left \lceil \frac{-z}{\log (1 - \epsilon/2)} \right \rceil\right ]$. Note that since $\epsilon$ is small (say less than 0.1), we can use the Taylor approximation of the logarithm to show that $O( \poly(1/ \epsilon, d) )$ intervals suffice. Note that the volume of the level set can be computed with high probability in polynomial time using the separation oracle presented earlier and a standard volume algorithm \cite{kannan1997random}. Then the lower approximation of the lesbesgue integral gives us an approximation with a corresponding bound on its quality:
\begin{align*}
   ( 1- \epsilon) \int f(x) dx  &\le ( 1- \epsilon/2) \int\limits_{f(x) \ge p_{max} \exp (-z)} f(x) dx\\
   &\le \sum\limits_i Vol(L_i) p_{max} ((1 - \epsilon/2)^i - (1 - \epsilon/2)^{i+1})\\
   & \le \int f(x) dx
\end{align*}
We note that the $(1 - \epsilon)$ multiplicative approximation of the integral gives us an $\epsilon$ additive approximation of the log partition function.

\newpage

\section{Proofs \label{app:proofs}}
\thmtentexp*

\begin{proof}
Recall that since $A(\theta)$ agrees with log-partition functions on consistent neighborhoods, so $A(\theta)$ must be smooth and convex on the consistent neighborhoods. We also use the continuity of $A$.

Now assume for the sake of contradiction that $A (\theta)$ is non-convex and there exists some $\theta_1, \theta_2$ s.t. $A(\theta_1 + t \theta_2)$ is not convex. Let $h$ denote this one dimensional function and let $g = \exp(h)$ be the same slice of $\exp(A(\theta))$. Since the closure of the consistent neighborhoods covers the parameter space, if $h$ is non-convex there must be a point $x$ at which $h$ is non-convex. Let $s_1 = \lim\limits_{y \to x^-} f'(y)$ and $s_1 = \lim\limits_{y \to x^+} f'(y)$. These limits must exist because $A$ is smooth on consistent neighborhoods and their closures cover the space. Since $h$ is nonconvex $s_1 > s_2$.

Now consider the equivalent limits $s_1', s_2'$ of $g$. Note that on the smooth parts of the domain $g'(x) = \exp (f(x))f'(x)$ so $s_1' = \exp(f(x)) s_1$ and $s_2' = \exp(f(x)) s_2$. Convexity of $g$ implies that $s_1' \le s_2' $ which contradicts $s_1 > s_2$.

Now we prove the fact about the subgradients of the log-partition function.  Note that since $A$ agrees with a log-partition function on the consistent neighborhoods, this holds immediately for $\theta$ in general position. The complement of consistent neighborhoods is also exactly the region in which there are multiple ways to define $T$. In fact, the set of valid $T$s spans the set of subgradients.

Compute $T$ using any appropriate, but fixed, triangulation (i.e. such that $T$ is $d-$sparse) and let $y$ be a direction towards the consistent neighborhood corresponding to this triangulation. Then $\lim\limits_{\epsilon \to 0+}\nabla_\theta A(\theta + \epsilon y) = \mathbb E_{x \sim p_\theta}[T(x)]$ and $T(x)$ is a subgradient in expectation.

In other words, since the triangulation corresponds to a consistent neighborhood adjacent to the current point, the chosen subgradient is the limit of the gradient when approaching the current point from that consistent neighborhood.
\end{proof}
\newpage

\bibliographystyle{alpha}
\bibliography{main}

\end{document}